\documentclass[graybox]{svmult}


\usepackage{amsmath,mathptmx,amssymb,graphicx}       
\usepackage{helvet}         
\usepackage{courier}        
\usepackage{type1cm}        
%
\usepackage{makeidx}         
\usepackage{graphicx}        
\usepackage{multicol}        
\usepackage[bottom]{footmisc}
\usepackage{graphicx} \usepackage{epstopdf}

\makeindex             


\begin{document}
	
	\title*{Convergence of solutions in a mean-field model of go-or-grow type with reservation of sites for proliferation and cell cycle delay}
	 \titlerunning{A mean field model with reservation of sites for proliferation and cell cycle delay}
	\author{Ruth E. Baker, P\'eter Boldog and Gergely R\"ost}
	\institute{Ruth E. Baker \at Mathematical Institute, University of Oxford, UK
		\and P\'eter Boldog \at Bolyai Institute, University of Szeged, Hungary 
		\and Gergely R\"ost \at Bolyai Institute, University of Szeged, Hungary \email{rost@math.u-szeged.hu} \\   Mathematical Institute, University of Oxford, UK}

\maketitle

\abstract{We consider the mean-field approximation of an individual-based model describing cell motility and proliferation, which incorporates the volume exclusion principle, the go-or-grow hypothesis and an explicit cell cycle delay. To utilise the framework of on-lattice agent-based models, we make the assumption that cells enter mitosis only if they can secure an additional site for the daughter cell, in which case they occupy two lattice sites until the completion of mitosis. The mean-field model is expressed by a system of delay differential equations and includes variables such as the number of motile cells, proliferating cells, reserved sites and empty sites. We prove the convergence of biologically feasible solutions: eventually all available space will be filled by mobile cells, after an initial phase when the proliferating cell population is increasing then diminishing. By comparing the behaviour of the mean-field model for different parameter values and initial cell distributions, we illustrate that the total cell population may follow a logistic-type growth curve, or may grow in a step-function-like fashion.}

\section{Introduction}

Cell proliferation and motility are key processes that govern cancer invasion or wound healing. The go-or-grow hypothesis postulates that proliferation and migration spatiotemporally exclude each other. This has been acknowledged, for example, for glioblastoma \cite{giese}. In general, two phenotypes that can be of particular importance to progression of aggressive cancers are `high proliferation-low migration' and `low proliferation-high migration', and the mechanisms governing this switching are of great interest in current medical research \cite{levchenko}. Here we consider a strong simplification of this phenomenon by assuming that (differently from \cite{geerle}) motile cells stop for a fixed period of time to complete cell division, upon which they immediately switch back into the migratory phenotype. We study the mathematical properties of a mean-field approximation of an individual based model describing this process, and this note complements our other ongoing works \cite{br,bbr} where we investigate in detail a range of biological hypotheses with the corresponding individual-based as well as mean-field, analytically tractable, models.

\section{The model}

Assume that agents (representing biological cells) move and proliferate on an $n$-dimensional
square lattice with length $\ell$ (in each direction), so that $K=\ell^n$ is an integer
describing the number of lattice sites. We divide our agent population into two subpopulations, motile and proliferative, with the condition that a proliferative agent has to be attached to an adjacent site which is reserved until the end of proliferation. As a result, sites can either contain a motile agent, a proliferating agent, be reserved for the daughter agent of an attached proliferative agent, or or be empty. At each time step, each motile agent can attempt to move into an adjacent lattice site or proliferate at its current site. However, if a motile agent attempts to move into a site that is already occupied or reserved, the movement event is aborted. Similarly, if a motile agent attempts to begin proliferation by reserving a site that is already occupied, then the proliferation event is aborted. Agents attempt to convert from being motile to proliferative at constant rate $r$, and the proliferative phase has length $\tau$, upon which two motile daughter agents appear, one on the proliferating site, and one on the reserved site. 

Based on the above, tracking the rate of change of the number of motile agents, $m(t)$, proliferative agents, $p(t)$, and reserved sites, $q(t)$, in time, and following the arguments of \cite{Baker:2010:CMF}, we obtain the following mean-field approximation:
$$m'(t)=-r m(t)\frac{K\!-\!m(t)\!-\!p(t)\!-\!q(t)}{K}+2r m(t\!-\!\tau)\frac{K\!-\!m(t\!-\!\tau)\!-\!p(t-\tau)-q(t-\tau)}{K},$$
$$p'(t)=r m(t)\frac{K-m(t)-p(t)-q(t)}{K}-r m(t-\tau)\frac{K-m(t-\tau)-p(t-\tau)-q(t-\tau)}{K},$$
$$q'(t)=r m(t)\frac{K-m(t)-p(t)-q(t)}{K}-r m(t-\tau)\frac{K-m(t-\tau)-p(t-\tau)-q(t-\tau)}{K},$$
where the term $(K-m(t)-p(t)-q(t))/K$ expresses the probability that a randomly selected site is empty at time $t$.
Using the variable $u=K-m-p-q$ that accounts for empty sites, we can write

\begin{eqnarray}
m'(t)&=&-r K^{-1} m(t)u(t)+2rK^{-1} m(t-\tau)u(t-\tau) , \label{1} \\
p'(t)&=&r K^{-1} m(t)u(t)-rK^{-1} m(t-\tau)u(t-\tau), \label{2}\\
q'(t)&=&r K^{-1} m(t)u(t)-r K^{-1}m(t-\tau)u(t-\tau), \label{3}\\
u'(t)&=&- rK^{-1} m(t)u(t). \label{4}
\end{eqnarray}

\section{Long-term behaviour}

The usual phase space for Eqs. \eqref{1}-\eqref{4} is $C=C([-\tau,0],R^4)$, the Banach space of continuous function from the interval $[-\tau,0]$ to $R^4$ equipped with the supremum norm. With the notation $x(t)=(m(t),p(t),q(t),u(t))$, our system is of the form $x'(t)=f(x_t)$ where $x_t \in C$ is defined by the relation $x_t(\theta)=x(t+\theta)$ for $\theta \in [-\tau,0]$ and $f: C \to R^4$ is defined by the right-hand side of Eqs. \eqref{1}-\eqref{4}. The standard results for delay differential equations provide existence and uniqueness of solutions from initial data $x_0=\phi \in C$ (see, for example \cite{kuang}).

Given the biological motivation, we are interested only in non-negative solutions, for which $p(t)=q(t)=rK^{-1}\int_{t-\tau}^t m(s)u(s)\text{d}s $ holds, meaning that proliferative cells at a given time $t$ are exactly those who started the proliferation process in the time interval $[t-\tau,t]$, and the reserved sites correspond to them. With this compatibility condition and the balance law $K=m(t)+p(t)+q(t)+u(t)$, we define the feasible phase space
\begin{eqnarray}\Omega&:=& \left\{\vphantom{\sum_{j=1}^{4} \phi_j(0)}\phi \in C : \phi_j(\theta)\geq 0 \text{ for all } \theta \in [-\tau,0], j=1,2,3,4 ; \right. \nonumber \\ 
&& \qquad \left.\sum_{j=1}^{4} \phi_j(0)=K; \quad \phi_2(0)=\phi_3(0)=rK^{-1}\int_{-\tau}^0 \phi_1(s)\phi_4(s)\text{d}s  \right\}. \label{feasible}
\end{eqnarray}

\begin{lemma} The set $\Omega$ is forward invariant, that is for any solution $x(t)$ with $x_0 \in \Omega$, $x_t \in \Omega$ for all $t\geq 0$.
	\end{lemma}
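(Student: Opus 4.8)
The plan is to verify, one at a time, that each of the three conditions defining $\Omega$ is preserved along the flow: the balance law $\sum_j x_j(t)=K$, the compatibility relation $p(t)=q(t)=rK^{-1}\int_{t-\tau}^t m(s)u(s)\,\mathrm{d}s$, and the non-negativity of all four components on the relevant history. I would treat them in order of increasing difficulty, since the first two are essentially algebraic consequences of the equations while non-negativity is the genuinely analytic step.

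First I would establish the balance law and the compatibility relation. Summing \eqref{1}--\eqref{4}, every term of the form $rK^{-1}m(t)u(t)$ and every term of the form $rK^{-1}m(t-\tau)u(t-\tau)$ cancels, so $(m+p+q+u)'(t)\equiv 0$; since $\phi\in\Omega$ gives $\sum_j\phi_j(0)=K$, the sum stays equal to $K$ for all $t\geq 0$. For compatibility, set $g(t):=rK^{-1}\int_{t-\tau}^t m(s)u(s)\,\mathrm{d}s$ and differentiate: $g'(t)=rK^{-1}\bigl[m(t)u(t)-m(t-\tau)u(t-\tau)\bigr]$, which is exactly the right-hand side of both \eqref{2} and \eqref{3}, so $p'(t)=q'(t)=g'(t)$. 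Because $\phi\in\Omega$ forces $p(0)=q(0)=g(0)$, integration yields $p(t)=q(t)=g(t)$ for all $t\geq 0$.

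The main work is non-negativity, which I would argue by stepping along the intervals $[(k-1)\tau,k\tau]$. For $u$ there is no obstacle: \eqref{4} is linear in $u$, so $u(t)=u(0)\exp\bigl(-rK^{-1}\int_0^t m(s)\,\mathrm{d}s\bigr)\geq 0$ whenever $u(0)=\phi_4(0)\geq 0$, irrespective of the sign of $m$. For $m$ the difficulty is the coupled delayed feedback $2rK^{-1}m(t-\tau)u(t-\tau)$ in \eqref{1}. On $[0,\tau]$ this term is determined by the initial data $\phi_1,\phi_4\geq 0$, hence is a known non-negative forcing $h(t)$; rewriting \eqref{1} as the scalar linear ODE $m'(t)+rK^{-1}u(t)\,m(t)=h(t)$ with non-negative coefficient $rK^{-1}u(t)$ and applying the integrating factor $\exp\bigl(rK^{-1}\int_0^t u\bigr)$ expresses $m(t)$ as $m(0)\geq 0$ plus an integral of non-negative terms, so $m(t)\geq 0$ on $[0,\tau]$. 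Assuming $m\geq 0$ on $[(k-1)\tau,k\tau]$, the identical computation on $[k\tau,(k+1)\tau]$ again has non-negative forcing, now built from the already-controlled values on the previous interval, which closes the induction. Finally, with $m,u\geq 0$ on all of $[-\tau,\infty)$, the integral representation $p(t)=q(t)=g(t)$ has a non-negative integrand, so $p$ and $q$ are non-negative as well.

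The step I expect to be the main obstacle is the non-negativity of $m$: unlike $u$, it is not governed by a closed scalar equation, and one must ensure the positive delayed feedback is never overwhelmed by the negative local term. The interval-by-interval bootstrap is what makes this tractable, since on each step of length $\tau$ the delayed term is frozen into a known non-negative forcing, reducing the analysis to a linear scalar ODE whose solution is manifestly non-negative. A secondary point worth checking is that the solution genuinely exists throughout each interval; this follows because the balance law together with the non-negativity proved on that interval confines $x(t)$ to the bounded box $[0,K]^4$, precluding finite-time blow-up and allowing the induction to proceed indefinitely.
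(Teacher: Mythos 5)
Your proposal is correct and follows essentially the same route as the paper: summing the equations for the balance law, integrating/differentiating the convolution integral for the compatibility condition, the explicit exponential formula for $u$, and a method-of-steps integrating-factor argument for the non-negativity of $m$, with $p,q\geq 0$ then read off from the integral representation. The only (harmless) extra is your closing remark on global existence, which the paper defers to after the lemma.
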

\begin{proof}
Integrate Eq. \eqref{2} from $0$ to $t$ to obtain (similarly for $q(t)$)
$$p(t)-p(0)=rK^{-1}\int_{t-\tau}^t m(s)u(s)\text{d}s-rK^{-1}\int_{-\tau}^0 m(s)u(s)\text{d}s.$$
 From $x_0 \in \Omega$ we have
$p(0)=q(0)=rK^{-1}\int_{-\tau}^0 m(s)u(s)\text{d}s,$
hence 
\begin{equation}
p(t)=q(t)=rK^{-1}\int_{t-\tau}^t m(s)u(s)\text{d}s, \label{pqint}
\end{equation}
thus the third condition in the definiton of $\Omega$ is preserved. The second trivially follows from summing up the equations to see $(m(t)+p(t)+q(t)+u(t))'=0,$
so $K=m(t)+p(t)+q(t)+u(t)$ is preserved. To confirm nonnegativity, note that $u(t)=u(0)\exp(-rK^{-1}\int_{0}^t m(s)\text{d}s)\geq 0.$ Assuming that $m(t)\geq 0$ for $t\leq t_0$, we have \\ $m(t-\tau)u(t-\tau)\geq 0$ for $t \leq t_0+\tau$, and consequently $m(t)\geq m(t_0)\exp(-rK^{-1}\int_{t_0}^t u(s)\text{d}s)$ holds on $[t_0,t_0+\tau]$. Hence, by the method of steps we obtain non-negativity of $m(t)$ for all $t$. Then the non-negativity of $p(t)$ and $q(t)$ follow from Eq. \eqref{pqint}.	
\qed	\end{proof}

Note that since solutions starting from $\Omega$ stay in this bounded set, they exist globally. Following \cite{kuang}, we say that a continuous functional $V : C \to R$ is a Lyapunov functional on the set $\Omega$
in $C$ for Eqs. \eqref{1}-\eqref{4}, if it is continuous on the closure of $\Omega$, and $\dot{V} \leq 0$ on $\Omega$. Here, $\dot{V}$ denotes the derivative of $V$ along solutions. In our case $\Omega$ is itself closed. We also define
$E := \{\phi \in \Omega : \dot{V} = 0\}$ and
$M := \text{the largest set in $E$ which is invariant with respect to
Eqs. \eqref{1}-\eqref{4})}.$

\begin{theorem} If $m(0)>0$, then $\lim_{t\to\infty} (m(t),p(t),q(t),u(t))=(K,0,0,0)$.
\end{theorem}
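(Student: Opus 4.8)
The plan is to run the LaSalle invariance principle for delay equations (\cite{kuang}) with the simplest candidate Lyapunov functional, identify the two possible limiting equilibria, and then use the monotonicity of $u$ together with $m(0)>0$ to discard the wrong one. First I would set $V(\phi)=\phi_4(0)$, so that along a solution $V=u(t)$. By the preceding Lemma every solution with $x_0\in\Omega$ remains in $\Omega$, where $m,u\ge0$; hence \eqref{4} gives $\dot V=u'(t)=-rK^{-1}m(t)u(t)\le0$, and $V$ is a Lyapunov functional on the closed bounded set $\Omega$. Boundedness of $\Omega$ makes the orbit precompact (the solution map is completely continuous for $t\ge\tau$), so LaSalle applies: the $\omega$-limit set of any solution lies in $M$, the largest invariant subset of $E=\{\phi\in\Omega:\phi_1(0)\phi_4(0)=0\}$.

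Next I would characterise $M$. Along any complete solution contained in $E$ one has $m(t)u(t)=0$ for all $t$, so by \eqref{4} $u'\equiv0$ and $u\equiv c$ is constant. If $c>0$ then $m\equiv0$, whence $p=q=0$ by \eqref{pqint}, and the balance law $m+p+q+u=K$ forces $c=K$; if $c=0$ then again $p=q=0$ and $m\equiv K$. Thus $M=\{(K,0,0,0),(0,0,0,K)\}$. Since a precompact orbit has a connected $\omega$-limit set and $M$ is a pair of isolated points, every solution converges to one of these two equilibria.

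The hypothesis $m(0)>0$ then selects the right one: the balance law gives $u(0)=K-m(0)-p(0)-q(0)\le K-m(0)<K$, and since $u$ is non-increasing we have $u(t)\le u(0)<K$ for all $t\ge0$. Therefore $u(t)\not\to K$, the limit $(0,0,0,K)$ is impossible, and the solution must converge to $(K,0,0,0)$.

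The step needing the most care is the rigorous use of the invariance principle in the delay setting---precompactness of the orbit and, above all, the reduction of ``invariant subset of $E$'' to ``complete orbits staying in $E$'', which is what legitimises the computation of $M$. If one prefers to avoid LaSalle, the same conclusion follows elementarily: integrating \eqref{4} gives $u(t)=u(0)\exp(-rK^{-1}\int_0^t m(s)\,ds)$, so if $u^*:=\lim_{t\to\infty}u(t)>0$ then $\int_0^\infty m<\infty$; but \eqref{pqint} and the balance law give $m(t)=K-u(t)-2rK^{-1}\int_{t-\tau}^t m(s)u(s)\,ds\to K-u^*$, and integrability of $m$ forces $K-u^*=0$, contradicting $u^*\le u(0)<K$. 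Hence $u^*=0$; then $0\le p(t)=q(t)\le r\tau\,u(t-\tau)\to0$ and finally $m=K-p-q-u\to K$.
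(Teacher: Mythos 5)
Your main argument is correct and follows essentially the same route as the paper: the Lyapunov functional $V(\phi)=\phi_4(0)$, LaSalle's invariance principle, identification of $M$ with the two equilibria $(K,0,0,0)$ and $(0,0,0,K)$, and exclusion of the latter using $m(0)>0$. The only differences are minor. First, your exclusion step is slightly more direct than the paper's: you observe that $u$ is non-increasing with $u(0)\le K-m(0)<K$, so $u(t)$ can never approach $K$; the paper instead argues by contradiction that $m\to0$ would force $p,q\to0$ via Eq.~\eqref{pqint}, violating $m+p+q\ge K-u(0)>0$. Both hinge on the same two facts (monotonicity of $u$ and $u(0)<K$). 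Second, your appended LaSalle-free argument is a genuine alternative not in the paper, and it checks out: monotonicity gives $u^*:=\lim u(t)$; $u^*>0$ would force $\int_0^\infty m<\infty$, hence $\int_{t-\tau}^t mu\to0$ and $m\to K-u^*>0$ by the balance law, contradicting integrability; then $p(t)=q(t)\le r\tau\,u(t-\tau)\to0$ and $m\to K$. This elementary route avoids the precompactness and invariance technicalities of the delay-equation LaSalle theorem entirely, at the cost of being specific to this system rather than illustrating the general method.
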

\begin{proof}
Consider the functional $V(\phi)=\phi_4(0)$. Then $\dot{V}=-rK^{-1} m(t)u(t) \leq 0$ for solutions in $\Omega$, and by LaSalle's invariance principle (cf. Thm. 2.5.3 in \cite{kuang}), the limit set of any solution is in $M$, thus on the limit set of any solution, $m u\equiv 0$ holds. Since for any solution $u$ is always zero or always positive, we have either $m\equiv 0$ or $u\equiv 0$. In both cases, $p=q\equiv 0$ follows. Hence, the limit set can only be composed of the two equilibria $(K,0,0,0)$ or $(0,0,0,K)$. Finally, we show that if $m(0)>0$, then $m(t)$ can not converge to $0$. Since $u(t)$ is monotone decreasing, for such a solution $m(t)+p(t)+q(t)=K-u(t)\geq K-u(0)>0$ should hold. If $m(t)\to 0$ as $t \to \infty$, then from Eq. \eqref{pqint}, also $p(t)=q(t) \to 0$. This contradicts $m(t)+p(t)+q(t)\geq K-u(0)>0$ and so we can exclude $(0,0,0,K)$ from the limit set. Therefore  $\lim_{t\to\infty} (m(t),p(t),q(t),u(t))=(K,0,0,0)$.
\qed	\end{proof}
Remark: if $m(0)=0$, then also $p(0)=q(0)=0$, so $u(0)=K$ and we are on the empty lattice having the trivial solution $(0,0,0,K)$.

\begin{figure}[ht]
	\centering
	\includegraphics[scale=0.45]{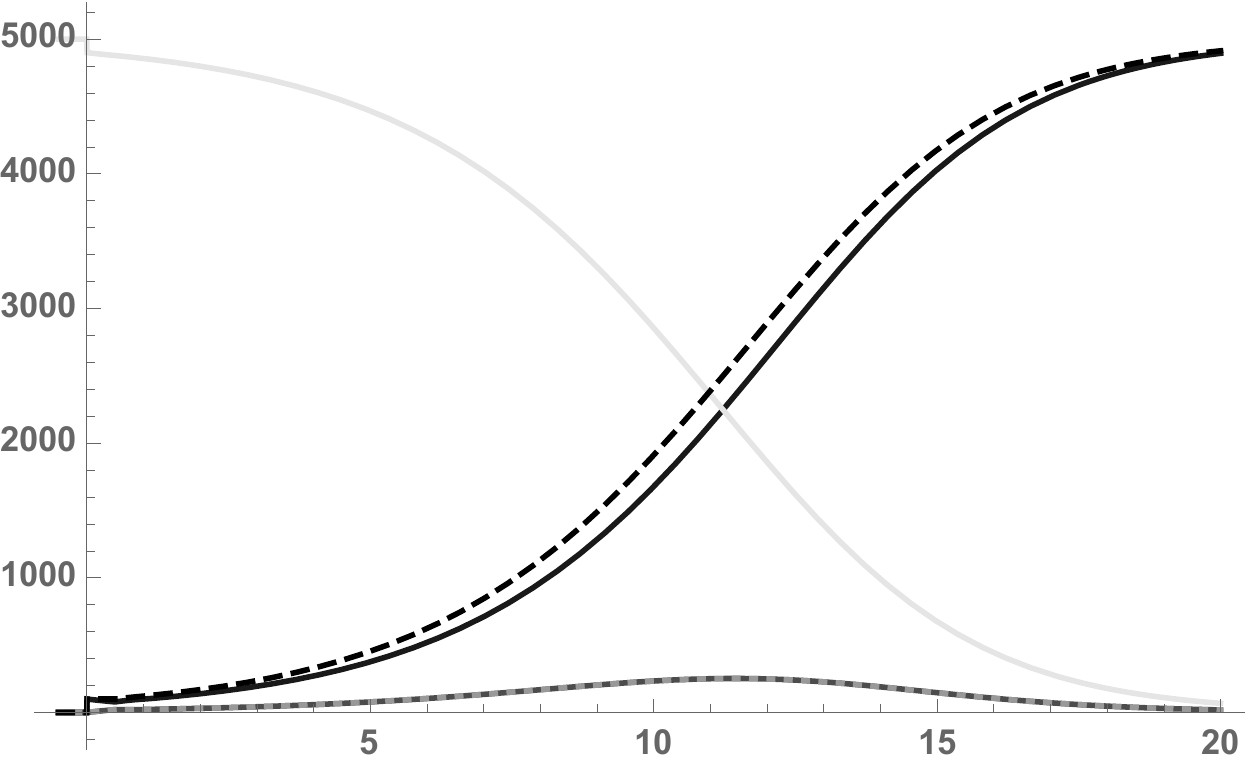}
	\includegraphics[scale=0.45]{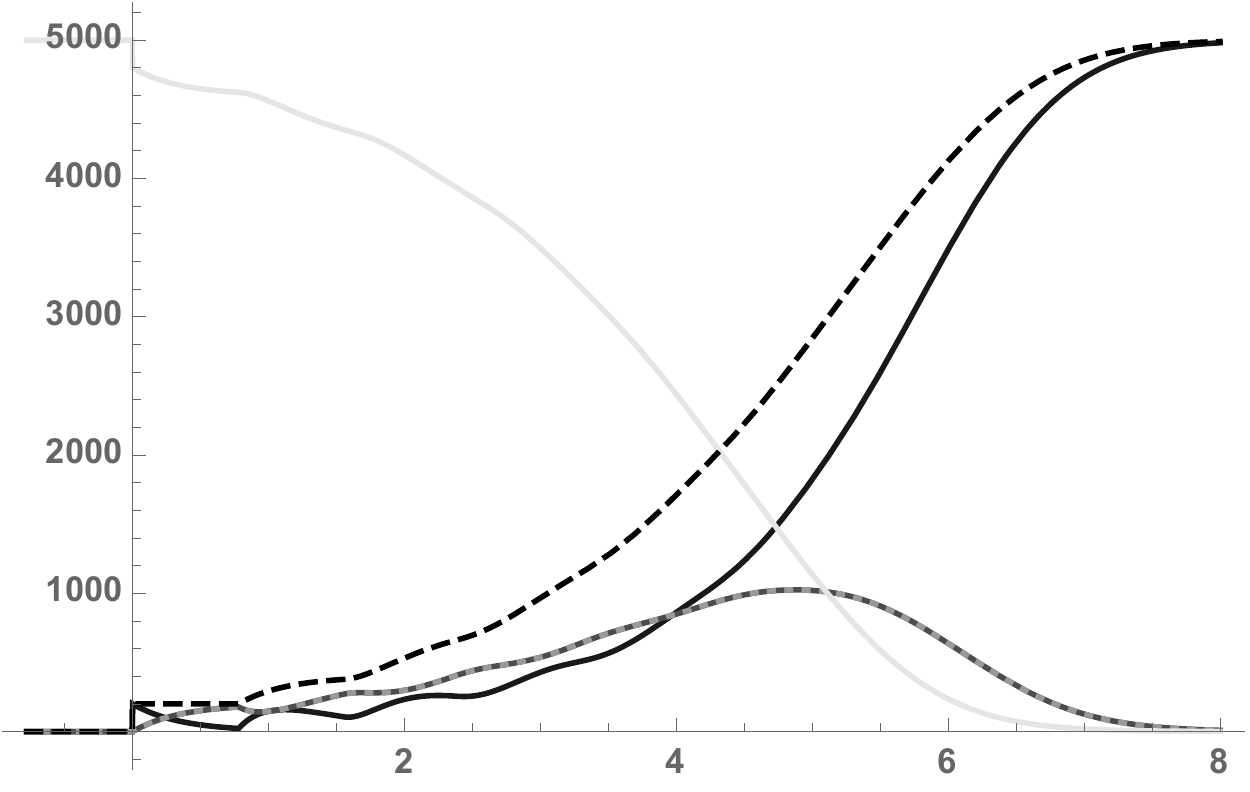}
	\includegraphics[scale=0.45]{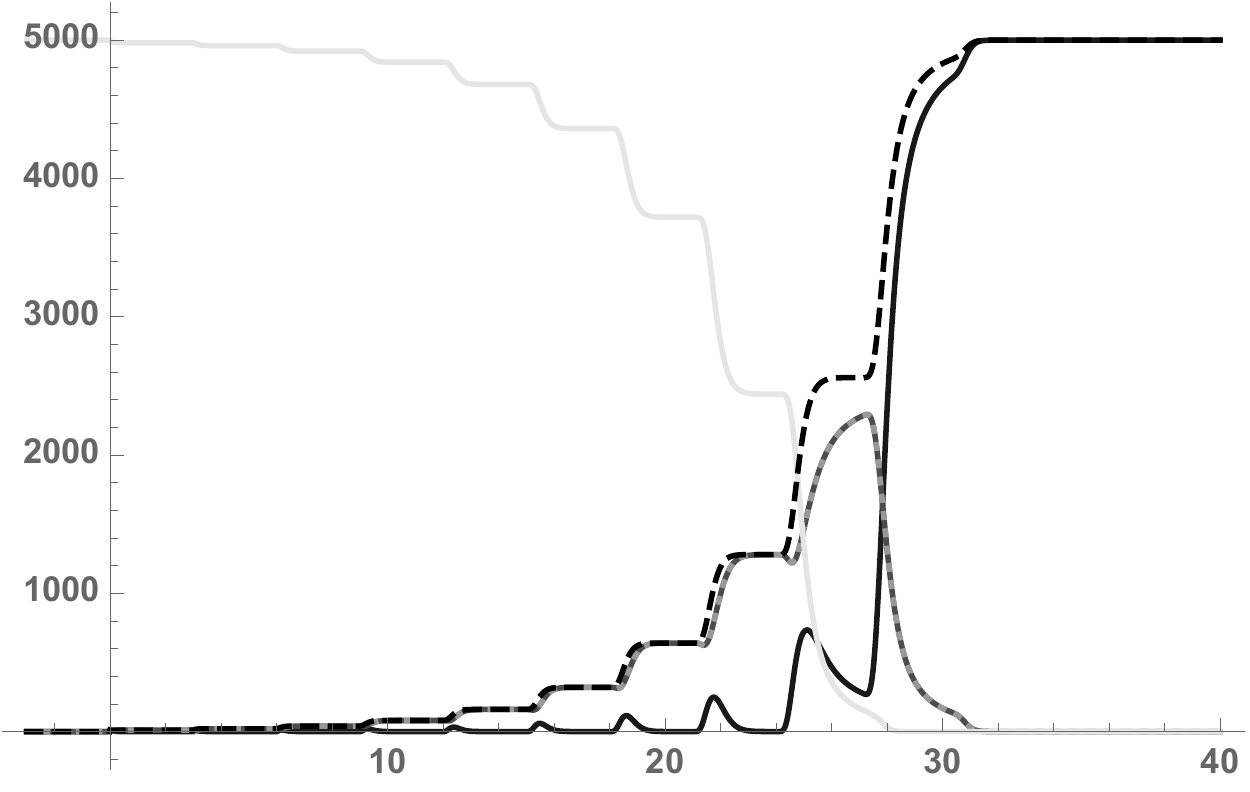}
	\includegraphics[scale=0.45]{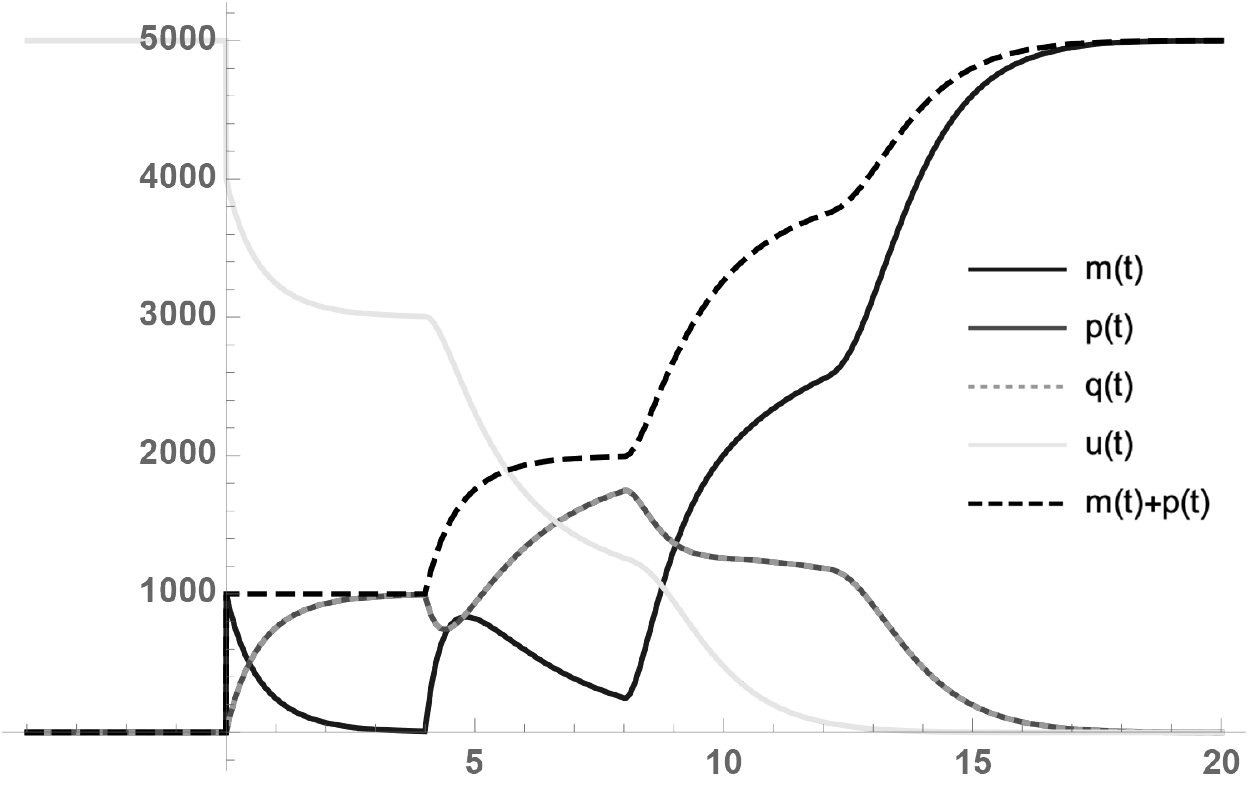}
	\caption{Four numerical simulations, where in each realization, the  initial function $\phi_1$ on [$-\tau,0$] is given by $a H(\theta)$, where $a>0$ is the initial number of cells,  $\phi_2 = \phi_3 = 0$ and $\phi_4 = K-a H(\theta)$. This choice of the initial function models an \textit{in vitro} experiment where motile cells are added to the plate at $t=0$. The parameters are the following: \textit{Top Left} -- $r=0.5,a=100,\tau=0.5$;	\textit{Top Right} -- $r=3,a=200,\tau=0.8$; \textit{Bottom Left} -- $r=10.5,a=10,\tau=3$; \textit{Bottom Right} -- $r=2,a=1000,\tau=4$; and $K=5000$ in each case. The legend in the bottom right figure applies to each.}
	\label{fig:Exp_1}
\end{figure}
\begin{figure}[ht]
	\centering
	\includegraphics[scale=0.45]{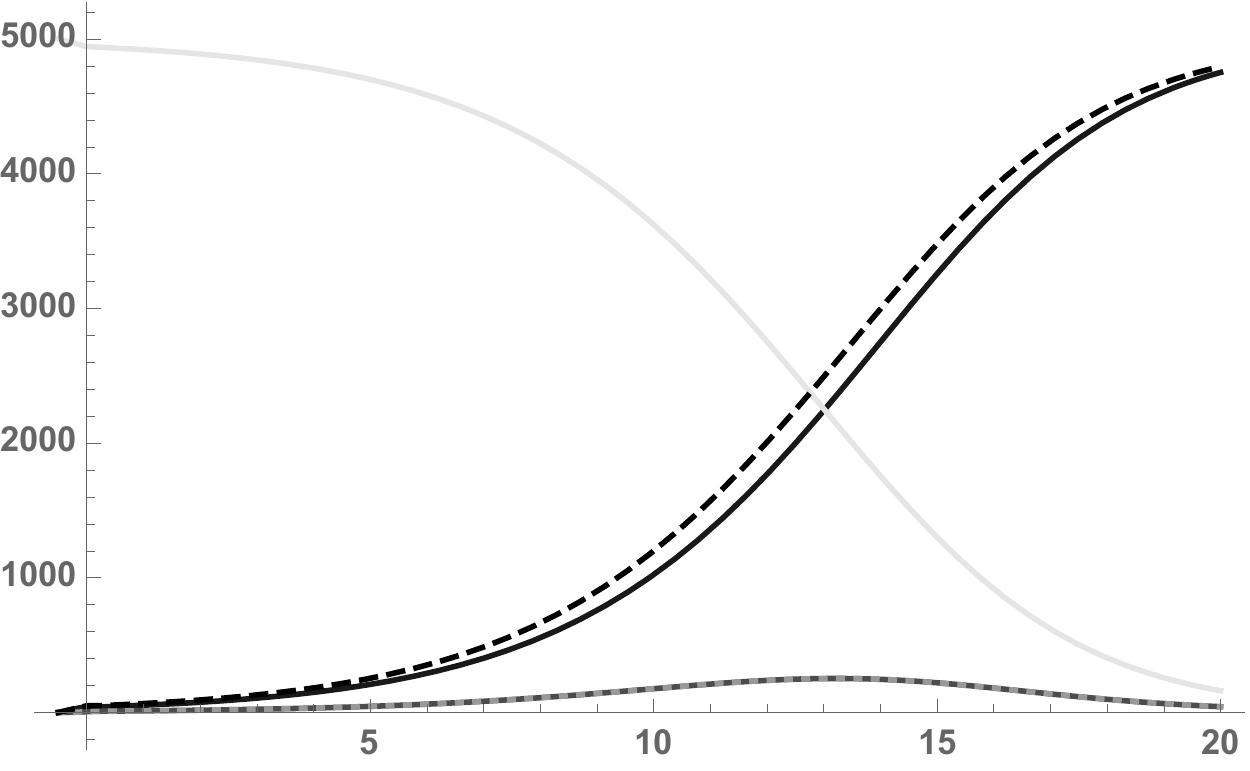}
	\includegraphics[scale=0.45]{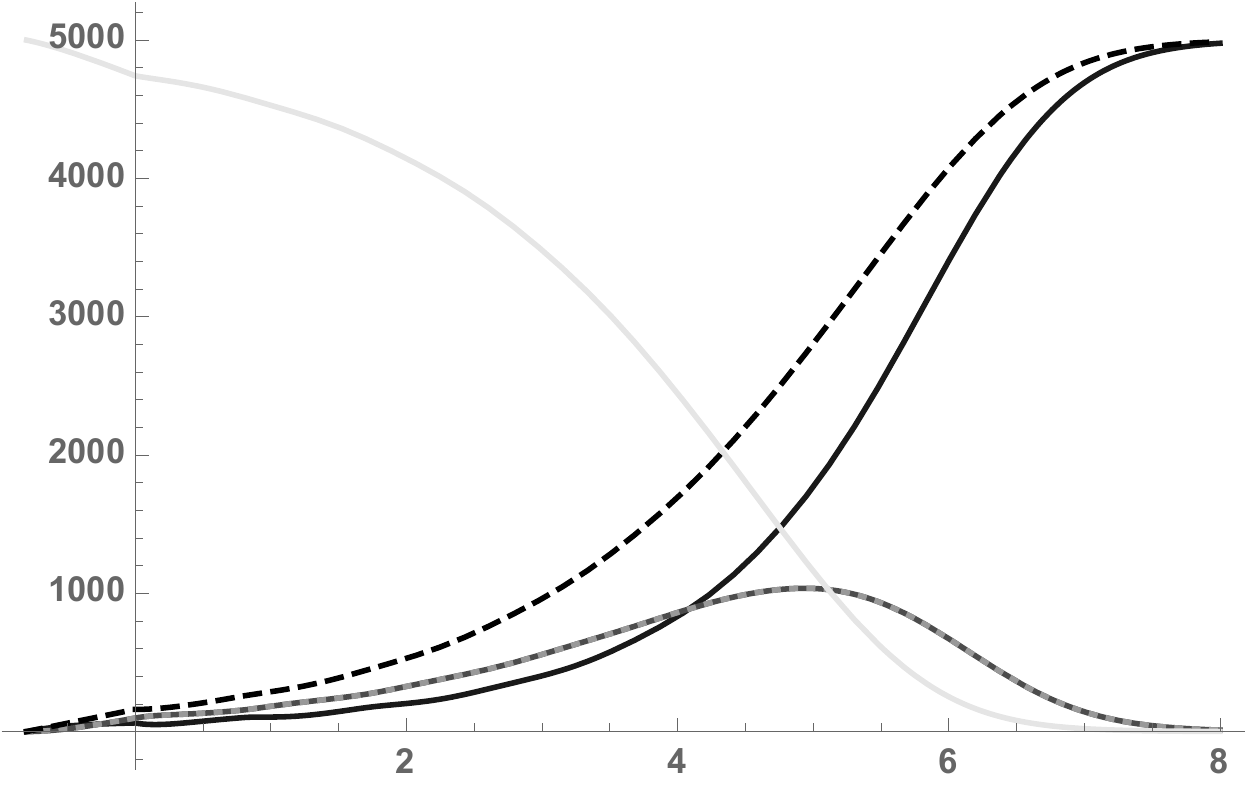}
	\includegraphics[scale=0.45]{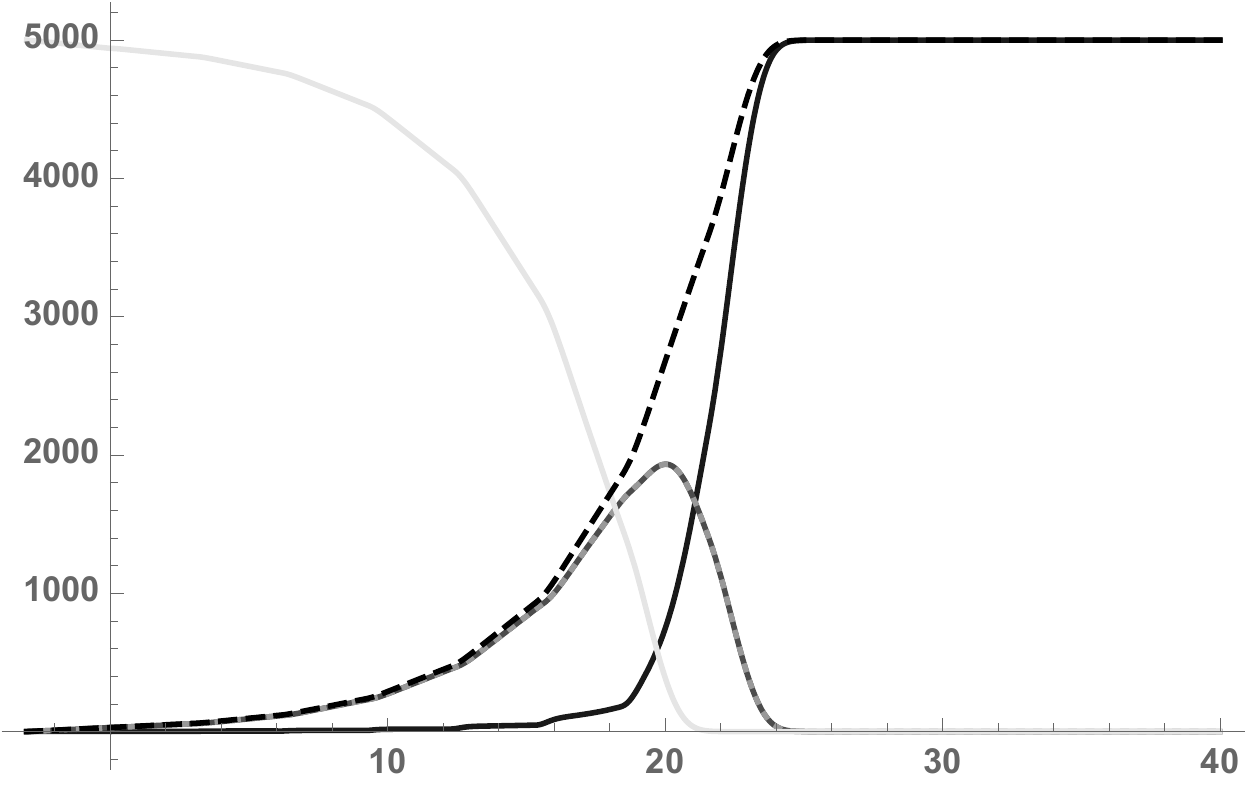}
	\includegraphics[scale=0.45]{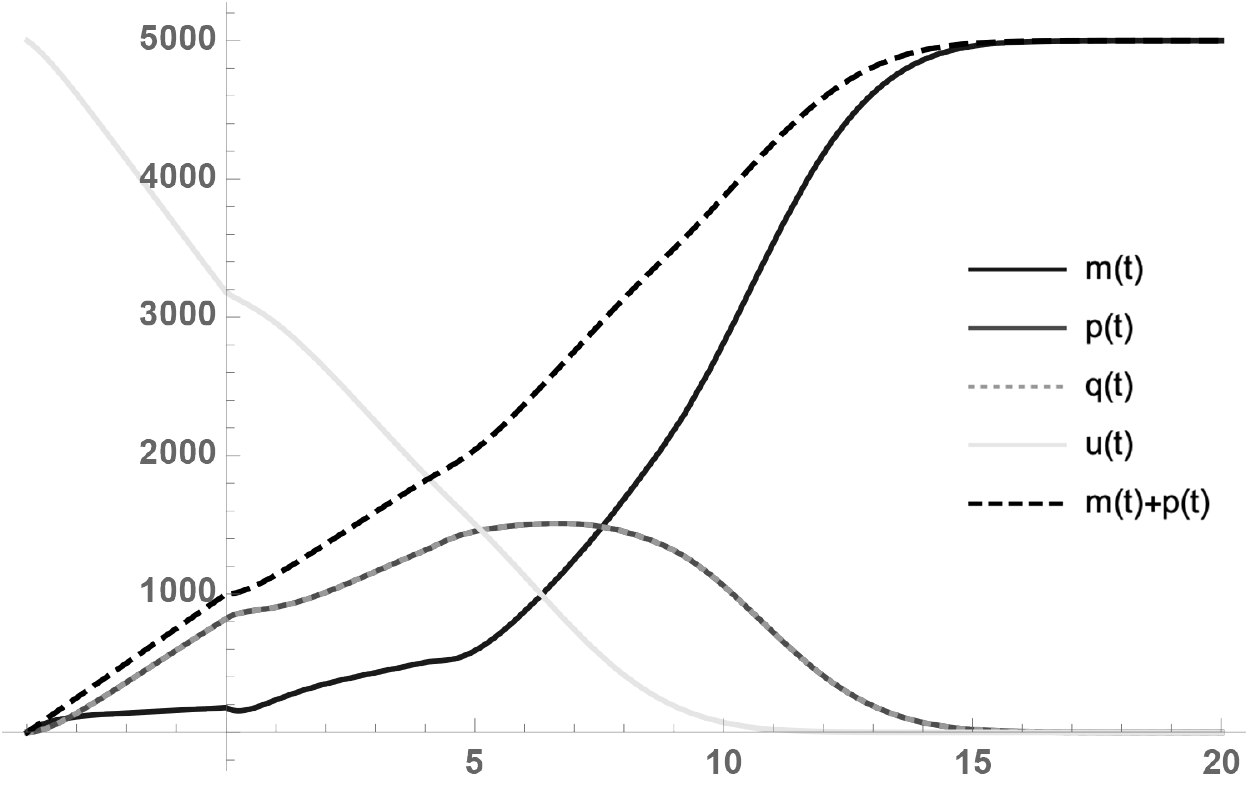}
	\caption{Four simulations, where in each realization motile cells are added with rate $a$ in the initial interval. The parameters are the following: \textit{Top Left} -- $r=0.5,a=100,\tau=0.5$;	\textit{Top Right} -- $r=3,a=200,\tau=0.8$; \textit{Bottom Left} -- $r=10.5,a=10,\tau=3$;	\textit{Bottom Right} -- $r=2,a=250,\tau=4$ and $K=5000$ in each case. The legend in the bottom right figure applies to each.}
	\label{fig:ExpCont}
\end{figure}

\section{Simulations and conclusion}

According to the choice of the initial functions, different \textit{in vitro} experiments can be modelled with Eqs. \eqref{1}-\eqref{4}. One approach is to add a number of motile cells all at once at $t=0$ to the empty cell space (e.g. a Petri dish). In this experiment the initial function $\phi_1$  is given by $\phi_1(\theta)= aH(\theta)$ for $\theta \in [-\tau,0]$, where $a$ stands for the  number of introduced cells at $t=0$, and $H(\theta)$ is the right-continuous Heaviside-function, i.e. $H(\theta)=0$ for $\theta<0$ and $H(\theta)=1$ for $\theta\geq 0$. In this setting, we take $\phi_2 (\theta)= \phi_3(\theta) = 0$ and $\phi_4 (\theta)= K-a H (\theta)$. While such initial data is not from $C$, they satisfy Eq. \eqref{feasible} and generate a continuous solution for $t>0$. Some of such simulations are shown in Figure \ref{fig:Exp_1}.

A more elaborate \textit{in vitro} experiment is the following. Instead of motile cells all at once, we add them in to the assay with a constant rate $a$ for a time interval of length $\tau$. After this, we leave the cell population intact. The initial data corresponding to this experiment can be obtained by solving a modification of Eqs. \eqref{1}-\eqref{4} with an additive forcing term $+a$ to the $m$-equation (and $-a$ to the $u$-equation), representing the gradual addition of $m$-cells, on an interval of length $\tau$, starting from the state $(0,0,0,K)$. Then we start solutions of Eqs. \eqref{1}-\eqref{4} with such initial functions, which satisfy Eq. \eqref{feasible}. Four realizations of this experimental setting are shown in Figure \ref{fig:ExpCont}.

The point of considering these two setups is that in the first we have only motile cells at $t=0$, while in the second at $t=0$ we have a distribution of cells in different phases of the cell cycle. This has a profound impact on the behaviour of solutions. While in Section 3 we proved that all solutions settle eventually at the state $(K,0,0,0)$, there are distinctive features of solutions in different scenarios. Figure 1 shows that when the cell cycle delay is small, the solutions resemble logistic growth. In contrast, when the delay is large relative to the average time between individual cells attempting enter the proliferative state, the initially motile cells enter the proliferative state more or less together, and hence complete  cell division more or less together too, resulting in a step-function-style growth curve in the total cell count. The sudden switching between phenotypes causes non-monotonic behaviours in $m(t)$ and $p(t)$ also. When we add motile cells continuously rather than adding them all at once, the solutions are much more similar to the expected logistic growth curve, and a different characteristic can be observed only for high proliferation rates or large numbers of initially added cells. In conclusion, an intermittent growth of a cell population can be an indication that the cell cycle length is relatively large (relative to inter-proliferation times), while its variance is small.

\medskip

\textbf{Acknowledgment} REB is a Royal
Society Wolfson Research Merit Award holder and would like to thank the Leverhulme
Trust for a Research Fellowship. GR was supported by Marie Sk\l odowska-Curie
Grant No. 748193. PB was supported by NKFI FK 124016 and EFOP-3.6.1-16-2016-00008.

\enlargethispage{0.5cm}


\begin{thebibliography}{References}
	
\bibitem{Baker:2010:CMF} Baker, R. E. and Simpson, M. J.. Correcting mean-field approximations for birth-death-movement processes. Physical Review E 82(4):e041905 (2010).
	
\bibitem{giese} Giese, A., Bjerkvig, R., Berens, M. E. and Westphal, M. Cost of migration: invasion of malignant gliomas and implications for treatment. Journal of Clinical Oncology 21:1624--1636, (2003).
	
\bibitem{geerle}
Gerlee, P,, and Nelander, S. The impact of phenotypic switching on glioblastoma growth and invasion. PLoS Computational Biology 8.6 (2012):e1002556.

\bibitem{br} Baker R. E., R\"ost G. Global dynamics of a novel delayed logistic equation
arising from cell biology. In preparation.

\bibitem{bbr} Boldog P., Baker R. E., and R\"ost, G. Go-or-grow type models with explicit cell cycle length. In preparation.

\bibitem{kuang} Kuang, Y.  Delay differential equations: with applications in population dynamics (Vol. 191). Academic Press. (1993).

\bibitem{levchenko} Noren, D. P., Chou, W. H., Lee, S. H., Qutub, A. A., Warmflash, A., Wagner, D. S., Popel, S. P., and Levchenko, A. Endothelial cells decode VEGF-mediated Ca2+ signaling patterns to produce distinct functional responses. Science Signalling, 9(416), ra20-ra20 (2016).
	
\end{thebibliography}
\end{document}